\newtheorem{thm}{Theorem}
\newtheorem{mydefinition}[thm]{Definition}
\begin{document}
\title{Bribery in Rating Systems: \\A Game-Theoretic Perspective}

\author{Xin Zhou \inst{1} \and 
	Shigeo Matsubara  \inst{2} \and 
	Yuan Liu\thanks{Corresponding authors.} \inst{3} \and
	Qidong Liu$^\star$ \inst{4}
}
\institute{School of Computer Science and Engineering, Nanyang Technological University, Singapore.
\email{xin.zhou@ntu.edu.sg} \and 
Center for Mathematical Modeling and Data Science, Osaka University, 
Osaka, Japan. \email{matsubara@sigmath.es.osaka-u.ac.jp} \and  
Cyberspace Institute of Advanced Technology, Guangzhou University，\\ Guang Dong, China.
\email{liuyuan@swc.neu.edu.cn} \and
School of Computer and Artificial Intelligence, Zhengzhou University, \\ Zhengzhou, China. \email{ieqdliu@zzu.edu.cn}
}

\authorrunning{X. Zhou et al.}

\maketitle              
\begin{abstract}
Rating systems play a vital role in the exponential growth of service-oriented markets.
As highly rated online services usually receive substantial revenue in the markets, malicious sellers seek to boost their service evaluation by manipulating the rating system with fake ratings.  
One effective way to improve the service evaluation is to hire fake rating providers by bribery.
The fake ratings given by the bribed buyers influence the evaluation of the service, which further impacts the decision-making of potential buyers.
In this paper, we study the bribery of a rating system with multiple sellers and buyers via a game-theoretic perspective.
In detail, we examine whether there exists an equilibrium state in the market in which the rating system is expected to be bribery-proof: no bribery strategy yields a strictly positive gain.
We first collect real-world data for modeling the bribery problem in rating systems.
On top of that, we analyze the problem of bribery in a rating system as a static game.
From our analysis, we conclude that at least a Nash equilibrium can be reached in the bribery game of rating systems.

\keywords{Bribery \and Game Theory \and Rating System \and Nash Equilibrium}
\end{abstract}

\section{Introduction}\label{sec:introduction}

In e-marketplaces ($e.g.$, Amazon, app store), buyers usually select reputable services or items offered by sellers based on their direct experiences or ratings given from other buyers. 
Direct experience is important and trustworthy; however, it is infeasible for buyers to interact with all sellers in order to gain direct experiences in a real-world setting. 
As a result, a rating/reputation system evaluating the given ratings is proposed as a crucial tool for supporting buyers’ decision-making~\cite{josang2007survey}.
In our context, we assume a seller and the owned service are one-to-one mapping.
Hence, when a buyer interacts with one seller, he/she purchases the item or uses the service from that seller.

In real-world rating systems, various studies have demonstrated that high profit is attached to high reputation services or items. A study of eBay conducted by Resnick $et~al$. revealed that buyers were willing to pay 8\% more to the item offered by reputable sellers than that provided by new sellers~\cite{resnick2006value}. Ye $et~al.$~\cite{ye2009impact} in their study showed that a 10\% improvement in reviewers’ rating on hotels could increase the number of bookings by 4.4\%. Driven by profit, sellers have incentives to perform unfair rating attacks by bribing buyers and increase their aggregated rating above their competitor’s, as this would increase their market share and overall revenues thereof~\cite{zhu2014discovery,liu2020fraudt,grandi2020personalised}. Fake ratings have been reported on the real-world movie dataset collected by Cao $et~al.$~\cite{cao2016complete}, and the fake ratings dominate the rating distribution in the first few days as soon as the movie is released. The fake ratings are mainly divided into fake positive ratings to boost the sellers’ reputation and dishonest negative ratings to decrease their competitors’ reputation. Although service attack detection frameworks or models have been proposed recently~\cite{manzoor2017attackdive,ouffoue2017attack,liu20213r,saude2021robust}, emerging underground market (in providing fake ratings) even worsen the bribery problem~\cite{sampath2018corporate}.

We assume there are two potential reasons that entice the sellers into bribing ``bot farms'' or ``human water armies''~\cite{wang2012serf} in order to inflate service ratings. The first is the low cost of performing
the bribery due to rating sparsity~\cite{zhou2016evaluating}.
From the analysis on the Google Play app store, the rating ratio is as low as 2.1\% based on a study of the top 20,000 apps~\cite{zhou2019arm}. That is, about 2 out of 100 buyers or users of the app rate their ratings on the platform.
In such a situation, app owners can easily increase the positive evaluation of their apps by bribing buyers with a compensation, monetary or not. That is, sellers exchange for positive ratings by bribing a group of buyers with certain payment. The positive ratings further influence the sellers' future revenues. 
The second reason is the attractive revenue gained from temporal (usually on a daily basis) top app ranking. App developers resort to fraudulent means to deliberately boost up their app ranking on an App store~\cite{zhu2014discovery,ramos2020negative}.
In this paper, we analyze the effect of a bribing strategy by defining the utility function and strategy space for each seller.  In addition, we model the problem as an economic game and examine whether an equilibrium state exists, in which the rating system is expected to be bribery-proof: no individual seller wants to perform any bribery strategy.

In single-player decision theory, the key notion is that of an optimal strategy, that is, a strategy that maximizes the player’s expected payoff for a given environment in which the player operates~\cite{lianju2011game,Grandi2016}. Such a situation in the single-player case can be fraught with uncertainty, since the environment might be stochastic, partially observable, and spring all kinds of surprises on the player.
In our work, we extend the setting into a more complex multi-players setting, in which the environment comprises other players, all of whom are also willing to maximize their payoffs. Thus, the notion of an optimal strategy for a given player is not meaningful as the best strategy also depends on the choices of others. The equilibrium state is reached when all players perform their best strategy considering others players' strategies.
In the multi-player setting, we model the sellers as players and map the bribing behavior into strategy space in game theory. Based on the model, we intend to find the equilibrium state within all combinations of sellers' strategies.
To simplify our analysis, we will use the terms ``seller'', ``player'' or ``service'' interchangeably.


In detail, we consider the static game in which the number of sellers and buyers are fixed within a time slot. The exponential increase of the strategy space with the number of sellers and buyers makes it difficult to search the optimal strategies. To find the equilibrium state in the static game, we devise a greedy algorithm to eliminate the dominated strategies.

We summarize the main contributions as follows:
\begin{itemize}
	\item
	Although some researchers have studied bribery in the rating system, they modeled the problem without real-world data support and merely depended on assumptions.
	In this paper, we collect two real-world datasets as the backbone of our modeling and analysis.
	\item We extend bribery in rating systems from a single seller into multiple sellers and buyers situation and explicitly interpret the problem from a game theory perspective.
	\item On top of the model, we analyze the effect of bribing strategies in the case of static game and found that a Nash equilibrium exists in sellers' strategy space, with too many buyers are bribed under Nash equilibrium compared to the social optimum.
\end{itemize}

We reveal that our study can be applied to all rating-based systems in which individuals may influence the decision of one other via the rating value. Bribery can disrupt buyers to interact with fake reputable sellers with a bribed high rating value.

%
%


\textbf{Related research lines}:
Our approach relates to several research lines in computational social choice, game theory, and reputation systems. 

\begin{description}
	\item[Computational social choice:] In our studied rating system, sellers try to influence potential buyers' decisions to maximize their own expected payoff. The problem is analogous to lobbying and bribery in computational social choice, where agents manipulate or modify their outcomes to reach their objectives~\cite{faliszewski2009hard,faliszewski2009llull}. Lobbying and bribery are established concepts in computational social choice, the research ranges from the seminal contribution of~\cite{helpman2001lobbying} to recent studies such as vote-buying in~\cite{Parkes2017Thwarting}.
	\item[Reputation systems:] Dishonest ratings influence the accuracy of reputation evaluation; hence, a robust reputation system should detect the attacks and mitigate the influence caused by the attacks~\cite{xu2017online}. Otherwise, the disruptive reputation can influence other agents. In this sense, ours can be seen as a study of reputation in multi-agent systems~\cite{jiang2013evolutionary}.
	To analyze the harm of unfair ratings, more recent studies based on information theory have been proposed in papers~\cite{wang2015quantifying} and~\cite{wang2016harmful}.
\end{description}


%
%

\begin{figure*}
	\centering
	\subfigure[\# of reviews \textit{vs.} ratings on Apple Store.\label{sfig:snowball_a}]{\includegraphics[trim=5 0 10 0, clip, width=0.31\textwidth]{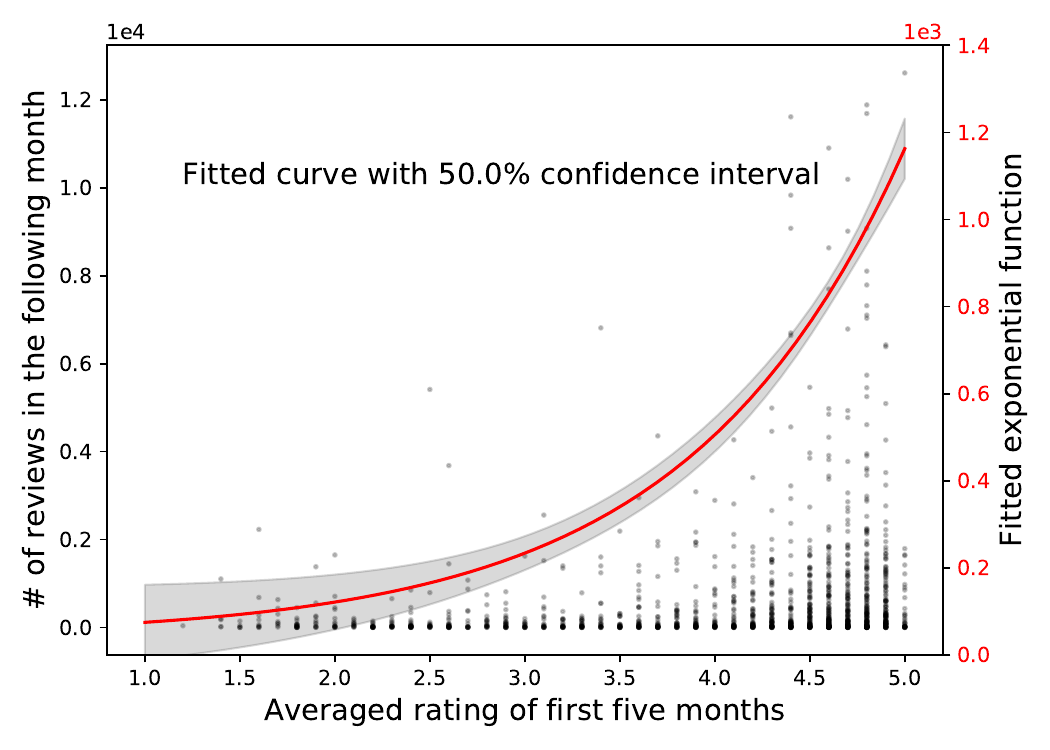}}
	\hfill
	\subfigure[Different genres of apps from Google Play. \label{sfig:snowball_b}]{\includegraphics[trim=5 0 10 10, clip, width=0.31\textwidth]{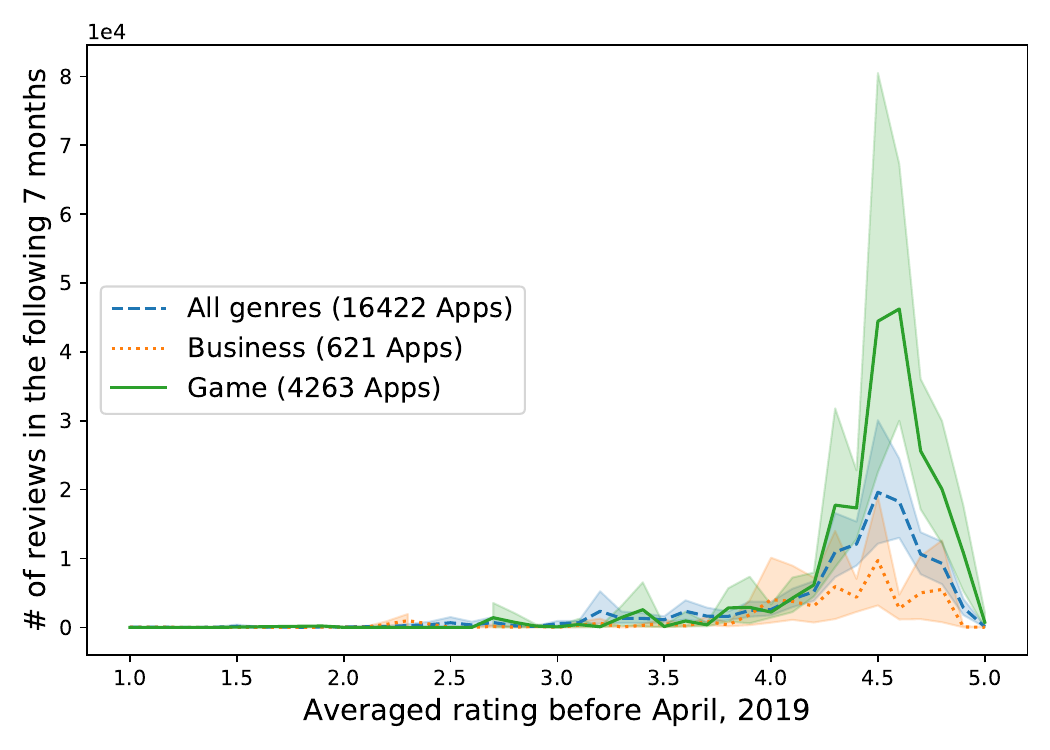}}
	\hfill
	\subfigure[\# of reviews \textit{vs.} installs on Google Play. \label{sfig:snowball_d}]{\includegraphics[trim=5 0 10 0, clip, width=0.31\textwidth]{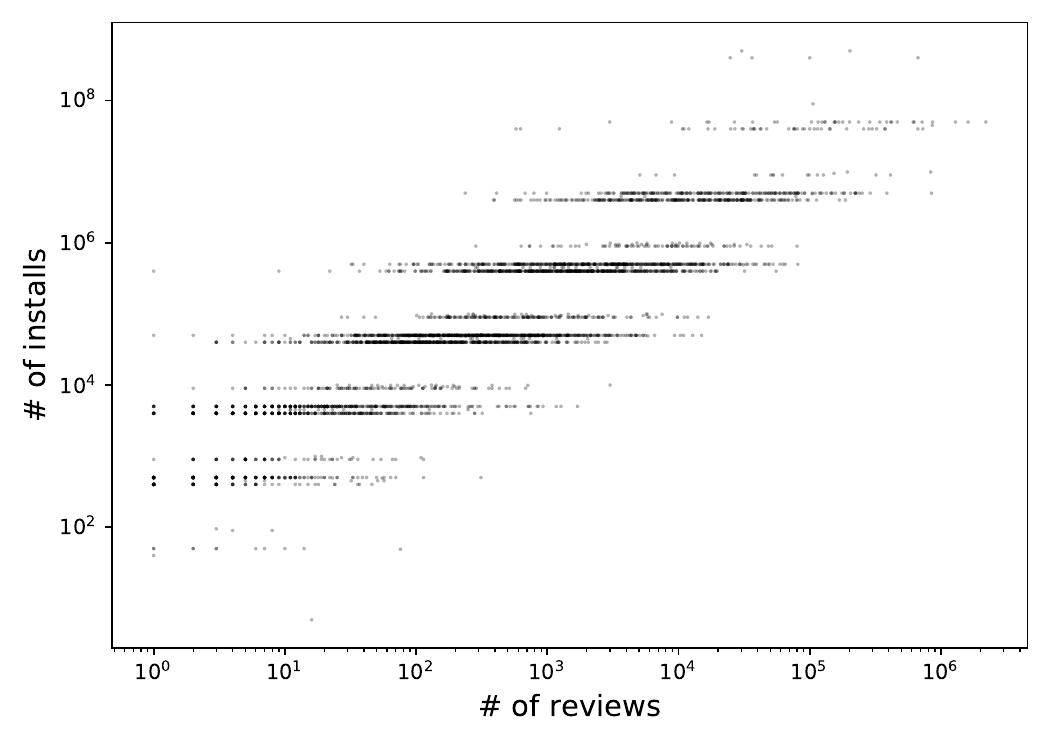}}
	\caption{Snowball effect shown on apps in different stores and different genres.}
	\label{fig:snowball}
\end{figure*}

\section{Profit Modeling on Data Observation}
We collect two real-world datasets to show how the problem can be represented as an economic game. 
By crawling the statistical data of 1609 apps from Apple App Store during December 2016 to May 2017 and the data of rating, app genre, reviews, and installs on 16422 apps from Google Play during April 2019 to November 2019, we first model the relation between the received rating of an app and its potential profit in the number of reviews.
Although individual apps may exhibit different patterns between ratings and installs, we study whether the overall pattern is consistent between apps from different app stores or app genres.
From the plotted Fig.~\ref{fig:snowball} (a) and (b), we find that higher-rated apps receive more reviews, although they are in different app stores and within different app genres.
The data shows that the relation between the received rating $r$ and the number of following reviews $N_{re}$ follows a power function.
\begin{equation}
N_{re} = a \cdot r^n,
\label{eq:revr}
\end{equation}
where $a \neq 0$ and $n$ are real numbers, and $a$ is a coefficient.
Since not all buyers score their rating after purchase, our collected data may not accurately capture the relation between rating and the transaction volume.
Seller's profile may also be evaluated in the number of installs; we further show the number of installs on a given rating in Fig.~\ref{fig:snowball} (c).
From Fig.~\ref{fig:snowball} (c), we can observe that the relation between the received rating $r$ and the number of following installs $N_{in}$ also follows a power function.
However, compared with the number of reviews, the number of installs is much higher skewed with ratings.
In Google Play, the number of installs is merely an estimated value on the order of magnitude of downloads.
Actually, we can obtain the relation between rating and installs through Fig.~\ref{fig:snowball} (b) and (c) indirectly.
Fig.~\ref{fig:snowball} (c) shows that the log scale of the number of installs $N_{in}$ is proportional to the log scale of the number of reviews $N_{re}$.
That is,

%
\begin{equation}
log N_{in} = b \cdot log N_{re} \Leftrightarrow N_{in} = (N_{re})^b,
\label{eq:revinstall}
\end{equation}
where $a \neq 0$.
Substituting equation~\eqref{eq:revr} into equation~\eqref{eq:revinstall}, we have
\begin{equation}
N_{in}=k \cdot r^{\Omega},
\label{eq:snowballeff}
\end{equation}
where $k=a^b$ and $\Omega = n\cdot b$.
The number of installs based on ratings also shows a snowball effect; when a few buyers explore a good app, everyone else wants to pile it on.
Following the observation, app owners competed to increase their app evaluation in order to increase their purchase revenue.
For a single seller, with limited potential buyers in the platform, maximizing its own payoff will influence the other's  welfare.
Hence, the problem can be represented as an economic game.




\section{Static Game}
To present our model, we assume that, in the rating system, each seller can observe the current buyers of other sellers alongside the current evaluation from the buyers.
The common knowledge in our game is as follows: \textit{a)} The players are rational and act with the intention to maximize their revenue. \textit{b)} In static game, all the players know their own payoffs as well as others, also all the players know that all the players know all the players' payoffs, and all the players know that all the players know that all the players know all the players' payoffs, and so on, ad infinitum.
We let the number of buyers be constant in a static game, the bribery space for a seller is the buyers that have not interacted with that seller.
The following analysis takes the rating system in the app store as an example; the analysis also can be applied to other numerical rating systems.

Considering that we have $M$ \textit{sellers} and $N$ \textit{buyers} in an app store within a time slot.
After a seller $s_i$ released its app on the app store, the app is evaluated by a finite set of buyers
$B_i=\{b_1, b_2, ..., b_l\}, l <= N$.
Low-rated app owners intend to bribe a set of buyers to increase its app evaluation.
We assume one buyer can only be bribed by one seller and the apps in the market are functionally equivalent.
The bribed rating is proportional to the payment, that is, more payment for higher rating.
For a fair rater, based on his/her experience, he/she submits a rating to evaluate the app.
The scored rating $r$ is drawn from a set of values $r \in [0, 1]$.
Although discrete assignment of 1 to 5 stars is common in online rating systems, such as Amazon, Apple Store, and Google play, the stars can be mapped into $[0, 1]$ to simplify the analysis.

We study the rating systems such as used in Apple Store; every interested buyer can see the evaluation of other buyers.
Only the buyer who paid for the app can evaluate it, and the aggregation of the evaluation will influence potential buyers' decision.
We represent the aggregated evaluation on the app of seller $s_i$ as a function $\bar{r_i}=avg(R_i)$ of the received ratings $R_i=\{r_1, r_2, ..., r_l\}$, where $avg$ is the average function across all real-valued ratings.

Suppose in a time slot, all the sellers perform bribery strategy to increase their app evaluation; as a result, the profit is based on their current rating distribution.
We define the strategy of sellers in the form of an ensemble of individual strategies:
\begin{mydefinition}
	\label{def:stra}
	An individual strategy on buyer $b_j$ is the amount of effort $\phi(b_j)$ required to improve its current rating to a new rating.
	We denote: $\phi(b_j) \rightarrow [0, +\infty)$. A strategy of a seller is an ensemble of individual strategies on each buyer, such that $\sum_{j=0}^{N} b_j \in [0, +\infty)$.
\end{mydefinition}

Under bribery, buyer's evaluation is closed under $min\{1, x+y\}$ for all $x, y \in [0, 1]$, where $x, y$ are two evaluations on an app.
To compare with the results of ~\cite{Grandi2016}, we adopt that the bribed cost is proportional to buyer's rating. For example, buyer $b_j$'s evaluation is 0.5, but when bribed by seller $s_i$ with $\phi(b_j) = 0.6$, the new evaluation will be $min\{1, x+y\} = min\{1, 1.1\} = 1$.
By definition~\ref{def:stra}, the total cost for seller $s_i$ to perform a bribery on a set of bribed buyers $B_{\phi}$ $(|B_{\phi}| \le N)$ is $\sum_{b\in B_{\phi}}\phi(b)$.
Next, we define the potential utility to a seller $s_i$ that can be gained from bribing a set of buyers.
To simplify our analysis, we assume the utility a seller can obtain is proportional to its aggregated rating.
That is, we let $\Omega = 1$ in Equation~\eqref{eq:snowballeff}. The assumption does not influence our final conclusion.

\begin{mydefinition}
	\label{def:utility}
	The expected utility of a seller $s_i$ is a function.
	\begin{equation}
	u(\bar{r_i})=\mathcal{N} \cdot \bar{r_i}^{\Omega} \cdot k,
	\end{equation}
	where $\mathcal{N} = N - \sum_{j=0}^M |B_j|$ is the potential buyers in the market,  $k$ can be viewed as the profit from an app purchase.
\end{mydefinition}
The utility of seller $s_i$ amounts to the number of potential buyers in the market and the rating rated by the interacted buyers. The underlying intuition is that buyers prefer to download and use higher rated apps.
Theoretically, a seller can bribe all the available buyers in the market; however, the potential profit will approach to zero because of no future interaction.
Based on previous definitions, we can define the payoff of a bribery strategy.

\begin{mydefinition}
	\label{def:payoff}
	Let $\phi$ be a strategy and $B_\phi$ is the bribed buyers by $\phi$. The payoff to seller $s_i$ from $\phi$ is:
	\begin{displaymath}
	\pi(\phi) = u^{\phi} - u^{0} - \sum_{b \in B_\phi} \phi(b),
	\end{displaymath}
	where $u^{0}$ is the initial utility and $u^{\phi}$ is the utility after execution of $\phi$.
\end{mydefinition}

\begin{mydefinition}
	\label{def:domi}
	Let $\phi'$ and $\phi''$ be feasible strategies for seller $s_i$. Strategy $\phi'$ is strictly dominated by $\phi''$ if for each feasible combination of the other players' strategies, $s_i$'s payoff from playing $\phi'$ is less than $s_i$'s payoff from playing $\phi''$:
	\begin{align*}
		\pi(\phi_1, ..., \phi_{j-1}, \phi', \phi_{j+1}, ..., \phi_{\mathcal{N}}) < 
		\pi(\phi_1, ..., \phi_{j-1}, \phi'', \phi_{j+1}, ..., \phi_{\mathcal{N}}),
	\end{align*}
	for each $(\phi_1, ..., \phi_{j-1}, \phi_{j+1}, ..., \phi_{\mathcal{N}})$ that can be constructed from the other players' strategy spaces $\varPhi_1, ..., \varPhi_{j-1}, \varPhi_{j+1}, ..., \varPhi_{\mathcal{N}}$.
	Let $\phi_{-i}$ denote the strategy combination from players other than $j$. Then, if occasionally, $\pi(\phi', \phi_{-j})$ = $\pi(\phi'', \phi_{-j})$, we denote $\phi'$ is weakly dominated by $\phi''$.
\end{mydefinition}

\begin{algorithm}[!h]
	\caption{The greedy bribing strategy}
	\label{alg:greedy}
	\SetKwProg{greedy}{Function \emph{greedy}}{}{end}
	
	\textbf{Input}: Seller $i$'s evaluation set $B_i, R_i, \bar{r_i}$, budget $\mathcal{B}$ \;
	\textbf{Output}: A bribing strategy $\phi$\;
	\greedy{($B_i, R_i, \bar{r_i}, \mathcal{B}$)}{
		Sort $r_j \in R_i$ in ascending order $R_D = r_0, ..., r_l$\;
		\ForEach{$r_j~in~R_D$}{
			\If {$\mathcal{B} > 0$}{
				\If {$r_j < 1$}{
					$\phi(j) = min\{1-r_j, \mathcal{B}\}$\;
					$\mathcal{B} = \mathcal{B} - \phi(j)$
				}
			}
		}
		\If{$\mathcal{B} > 0$} {
			Compute non-rater set $B_{-i}$\;
			\ForEach{$j~in~B_{-i}$}{
				\If {$\mathcal{B} > \bar{r_i}$}{
					$\phi(j) = min\{1, \mathcal{B}\}$\;
					$\mathcal{B} = \mathcal{B} - \phi(j)$
				} else {
					break
				}
			}
		}
	}
\end{algorithm}

Given the definitions and the conclusions in paper~\cite{Grandi2016}, a rational seller would prefer greedy strategy to others under the same budget.
In other words, a strategy of bribing non-raters always dominates others.
Algorithm~\ref{alg:greedy} defines the greedy bribing strategy.
In the first loop, the seller bribes the history buyers who scored low ratings greedily.
If the budget is still above zero, then the seller tries to bribe new buyers.


\begin{proposition}
	\label{prop:greedy}
	Under same combination strategies $\phi_{-i}$ of other sellers', the greedy bribing strategy $\phi^o$ for $s_i$ always dominates others with same budget $\mathcal{B}$.
\end{proposition}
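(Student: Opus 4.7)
The plan is to reduce the dominance claim to a pointwise comparison of post-bribery averages. Because the buyer pool is partitioned among sellers, $\phi_{-i}$ affects neither $R_i$ nor $\mathcal{N}$; combined with Definitions~\ref{def:utility} and~\ref{def:payoff}, any two strategies $\phi,\phi'$ for seller $i$ with equal budget $\mathcal{B}$ satisfy
\[
\pi(\phi)-\pi(\phi')\;=\;\mathcal{N}\cdot k\cdot\bigl(\bar{r_i}^{\phi}-\bar{r_i}^{\phi'}\bigr),
\]
so the payoff comparison is independent of $\phi_{-i}$ and reduces to showing $\bar{r_i}^{\phi^o}\ge \bar{r_i}^{\phi'}$ for every feasible $\phi'$ with total cost $\mathcal{B}$.

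I would first analyze \emph{non-wasteful} strategies---those satisfying $\phi(j)\le 1-r_j$ on each rater and $\phi(j)\le 1$ on each non-rater---since for such strategies every dollar of budget raises exactly one rating by one. The post-bribery numerator of the average is then $S+\mathcal{B}$, with $S=\sum_{r\in R_i}r$ fixed, while the denominator equals $|R_i|+h(\phi)$, where $h(\phi)$ counts the freshly bribed non-raters. Maximising $\bar{r_i}^{\phi}$ therefore reduces to minimising $h(\phi)$, and Algorithm~\ref{alg:greedy} achieves this by design: it first exhausts each rater's slack $1-r_j$ (which inflates the numerator without incrementing $h$), then opens non-rater slots only after the slack is depleted, saturating each opened slot to rating $1$ before starting the next. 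No non-wasteful strategy with budget $\mathcal{B}$ can use fewer non-raters, so $\phi^o$ attains the maximum of $\bar{r_i}^{\phi}$ over the non-wasteful class.

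For a possibly wasteful $\phi'$, I would apply two exchange operations that can only weakly increase the average: (a) reassigning any overshoot past a cap to an unsaturated rater strictly increases the numerator, and (b) transferring an entire non-rater bribe of amount $\alpha$ to an unsaturated rater with slack $\ge\alpha$ keeps the numerator fixed while dropping $h$ by one. Iterating (a) and (b) until neither applies yields a non-wasteful strategy that fills raters before non-raters, i.e.\ one equivalent to $\phi^o$; hence $\bar{r_i}^{\phi^o}\ge\bar{r_i}^{\phi'}$, and the inequality is strict whenever $\phi'$ is itself wasteful or bribes a non-rater while rater slack remains.

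The principal obstacle is the bookkeeping behind these exchanges: because the capped update rule $\min\{1,x+y\}$ couples the numerator to the denominator whenever a non-rater slot opens or closes, the swaps must be performed atomically so each step alters at most one of the two quantities. I would handle this by splitting every transfer into an atomic ``move up to the rater's remaining slack'' operation followed by an atomic ``close the vacated non-rater slot'' operation, and then re-verifying monotonicity of $(S+\mathcal{B}')/(|R_i|+h)$ in $h$ after each atomic move, where $\mathcal{B}'$ denotes the current effective (non-wasted) spend.
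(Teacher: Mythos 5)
Your core route is essentially the paper's: with the budget $\mathcal{B}$ fixed, the rating mass a strategy can add to the numerator is (at most) $\mathcal{B}$, so the comparison reduces to the post-bribery average, and the greedy strategy of Algorithm~\ref{alg:greedy} wins because it spends that mass while opening as few fresh rater slots as possible. The paper's own proof is exactly the comparison of $(\mathcal{N}-|B_{\phi^o}|)\,\frac{\mathcal{B}+R}{|B_{\phi^o}|}$ with $(\mathcal{N}-|B_{\phi}|)\,\frac{\mathcal{B}+R}{|B_{\phi}|}$ using $|B_{\phi^o}|\le|B_{\phi}|$. Where you go beyond the paper is the treatment of wasteful strategies: the paper silently credits every strategy with the full numerator $\mathcal{B}+R$, ignoring the cap $\min\{1,x+y\}$, whereas your exchange operations make that step honest. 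That part of your argument is sound and, if anything, tightens the published proof.

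The one step that does not survive contact with the paper's model is the displayed identity $\pi(\phi)-\pi(\phi')=\mathcal{N}\cdot k\cdot(\bar{r_i}^{\phi}-\bar{r_i}^{\phi'})$. In Definition~\ref{def:utility} the pool of potential buyers is $N$ minus the interacted buyers, and the paper recomputes it after bribery (hence the factors $\mathcal{N}-|B_{\phi^o}|$ and $\mathcal{N}-|B_{\phi}|$ above, and the remark that bribing everyone leaves no future value): recruiting a non-rater both enlarges the denominator of the average and removes one potential customer. So in precisely the interesting comparison --- greedy tops up existing raters while the rival strategy recruits fresh non-raters --- your equality is false, because the two strategies leave different pools. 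The gap is harmless but must be closed explicitly: your own quantity $h(\phi)$, the number of freshly bribed non-raters, is minimized by the greedy strategy, so $\mathcal{N}-|B_{\phi^o}|\ge\mathcal{N}-|B_{\phi'}|$ holds alongside $\bar{r_i}^{\phi^o}\ge\bar{r_i}^{\phi'}$, and dominance follows by multiplying the two monotone nonnegative factors (exactly as in the paper) rather than from a single-factor identity. With that repair, and noting that a fixed $\phi_{-i}$ shifts $\mathcal{N}$ and the available non-raters identically for both of seller $i$'s strategies (the paper's one-briber-per-buyer assumption), your argument is complete and delivers the same weak dominance the paper actually proves.
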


	
	

Under the same budget, it shows when a buyer is bribed, it is worthy to bribe it completely with the highest rating.
We further analyze under what conditions the rating system is bribery-proof.
Based on the above proposition, we can show that no strategy is profitable if the number of potential buyers is too low, $i.e.$, the ratio between the number of current rated buyers $|B_i|$ and potential buyers is greater than the profit of an item $k$.
\begin{lemma}
	\label{lemma:profitstra}
	For seller $s_i$, no strategy is profitable if
	\begin{displaymath}
	\frac{|B_i|}{\mathcal{N}} \geq k.
	\end{displaymath}
\end{lemma}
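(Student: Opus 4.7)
The plan is to reduce the problem to a one-variable analysis via Proposition~\ref{prop:greedy} and then perform an explicit case split on the budget.

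First, I would invoke the greedy-dominance result: for any fixed combination $\phi_{-i}$ of the other sellers' strategies and any budget $\mathcal{B} \ge 0$, the greedy strategy $\phi^o$ dominates every other strategy of the same budget. Hence, to rule out any profitable bribery for seller $i$, it suffices to show that $\pi(\phi^o) \le 0$ for every $\mathcal{B} \ge 0$, reducing the problem to a single-variable question in $\mathcal{B}$.

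Second, writing $R = \sum_{r \in R_i} r$ for the current rating sum, I would split the analysis according to whether the budget only boosts existing ratings or also recruits non-raters. In the boost-only regime $\mathcal{B} \le |B_i| - R$, the rater count $|B_i|$ and the potential-buyer pool $\mathcal{N}$ both stay fixed and the average rating rises by $\mathcal{B}/|B_i|$, so Definitions~\ref{def:utility} and~\ref{def:payoff} yield
\[
\pi(\phi^o) \;=\; \mathcal{B}\!\left(\frac{k\,\mathcal{N}}{|B_i|}-1\right),
\]
which is non-positive exactly under the hypothesis $|B_i|/\mathcal{N} \ge k$. In the recruit regime $\mathcal{B} > |B_i| - R$, the greedy strategy first lifts all existing ratings to $1$ at cost $|B_i|-R$ and then spends the leftover $\mathcal{B}' := \mathcal{B} - (|B_i|-R)$ to bribe $\mathcal{B}'$ non-raters up to rating $1$; this simultaneously grows the rater count by $\mathcal{B}'$, shrinks the pool to $\mathcal{N}-\mathcal{B}'$, and drives the new average to $1$. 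A direct calculation should then give
\[
\pi(\phi^o) \;=\; (|B_i|-R)\!\left(\frac{k\,\mathcal{N}}{|B_i|}-1\right) \;-\; \mathcal{B}'(k+1),
\]
whose first term is non-positive under the hypothesis and whose second term is unconditionally non-positive.

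The main obstacle is the coupled bookkeeping in the recruit regime, where the numerator (rating sum), denominator (rater count), and pool size $\mathcal{N}$ all move together when a non-rater is enlisted; getting the signs right is what makes the bound tight. The marginal interpretation — the marginal gain $k\mathcal{N}/|B_i|$ from adding one unit to the rating sum cannot exceed its marginal cost of $1$ — serves as a sanity check that the two cases are exhaustive and that recruiting non-raters can only make bribery less profitable.
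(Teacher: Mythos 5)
Your proposal is correct, and its skeleton is the same as the paper's: reduce to a single strategy per budget via Proposition~\ref{prop:greedy}, then compute the greedy payoff explicitly and read off the sign condition $k\mathcal{N}/|B_i| \le 1$. Where you differ is in completeness. The paper's own computation writes $u^{\phi} = \mathcal{N}\,\frac{\mathcal{B} + \sum_{r \in R_i} r}{|B_i|}\cdot k$, which silently assumes the whole budget goes into raising existing ratings (and ignores the cap at rating $1$); that expression is only valid in your boost-only regime $\mathcal{B} \le |B_i| - R$. You add the recruit regime, and your formula there checks out against Definitions~\ref{def:utility} and~\ref{def:payoff}: a recruited non-rater costs $1$, joins $B_i$ at rating $1$, and leaves the pool, giving $u^{\phi^o} = k(\mathcal{N}-\mathcal{B}')$ and hence
\begin{equation*}
\pi(\phi^o) \;=\; (|B_i|-R)\left(\frac{k\,\mathcal{N}}{|B_i|}-1\right) \;-\; \mathcal{B}'(k+1),
\end{equation*}
both terms non-positive under the hypothesis. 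So your version actually patches an implicit gap in the paper's argument rather than taking a different route; the marginal-cost sanity check ($k\mathcal{N}/|B_i| \le 1$ per unit of rating mass, and recruiting additionally shrinks the pool) is exactly the right intuition for why the extra case can only hurt. The only residual nit is the fractional leftover budget (a non-rater bribed with $\epsilon < 1$ joins with rating $\epsilon$), which perturbs the bookkeeping slightly but not the sign of either term; the paper does not treat it either.
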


\begin{proof}
	Let $\phi$ be any strategy for seller $s_i$ and $s_i$'s current rated buyers $B_i \neq \emptyset$ . Then, by definition~\ref{def:payoff}, the payoff under $\phi$ with bribing budget $\mathcal{B} > 0$ is
	\begin{displaymath}
	\pi(\phi) = u^{\phi} - u^{0} - \mathcal{B}.
	\end{displaymath}
	According to proposition~\ref{prop:greedy}, the greedy strategy always dominates others.
	We give proof for the greedy strategy here.
	\begin{align*}
		& \pi(\phi) =
		u^{\phi} - u^{0} - \mathcal{B} = \\
		& \mathcal{N}\frac{\mathcal{B} + \sum_{j \in R_i} r_j}{|B_i|}\cdot k - \mathcal{N}\frac{\sum_{j \in R_i} r_j}{|B_i|}\cdot k - \mathcal{B} =
		\mathcal{N}\cdot k \cdot (\frac{\mathcal{B}}{|B_i|}) - \mathcal{B}.
	\end{align*}
	
	If strategy $\phi$ is profitable, then
	$\mathcal{N} \cdot k \cdot (\frac{\mathcal{B}}{|B_i|}) - \mathcal{B} > 0.$
	Therefore, for a profitable strategy, we have
	$\frac{|B_i|}{\mathcal{N}} < k.$
\end{proof}

The lemma shows for sellers that have already bribed a large amount of buyers, further bribing may result in no profit.
In real world, a bad reputed seller is more willing to improve its evaluation by performing a bribery strategy.
In multi-player situations, an open question for a seller is whether the best response to other player's strategies exists.
In the next section, we analyze the strategy space for each player and identify the best response in multi-player situations.

\subsection{Best Response}

In the previous section, we showed that bribing more buyers may not result in more profit.
When a seller bribes a buyer, it is better to induce the buyer to give the highest rating because the greedy strategy always dominates other strategies.
However, in real world, a buyer who has given his/her rating may refuse to improve it.
Thus, in this section, we redefine the game in this situation.

Based on Lemma~\ref{lemma:profitstra}, there exists a maximum number of buyers that can be bribed by seller $s_i$ in the market.
For seller $s_i$, it is always profitable to bribe the interacted buyers as shown in Lemma~\ref{lemma:profitstra}.
For low $\bar{r_i}$, sellers will try to bribe new buyers to get the highest rating, 1.
Thus, we redefine the strategy of a seller: a strategy $\phi$ for seller $s_i$ is the number of buyers that will be bribed $\phi_i$.
The strategy space $[0, \mathcal{N}]$ covers all the choices that could be of interest to $s_i$.
The redefinition of the strategy implies that the optimal strategy always
bribes buyers with the highest rating.
The new definition focuses on buyers instead of ratings.

Given the seller $s_i$'s current state, aggregated ratings $\bar{r_i}$, and the number of potential buyers in the market,
some interacted buyers refused to improve their given rating; thus, we are curious about how many buyers should be bribed for maximum profit.


The payoff to seller $s_i$ from bribing $\phi_i$ buyers when the bribery strategy combination of the other sellers are $(\phi_1, ..., \phi_{i-1}, \phi_{i+1}, ..., \phi_M)$ is
\begin{equation}
\label{equation:strax}
\pi(\phi_i, \phi_{-i}) = u^{(\phi_i, \phi_{-i})} - u^{0} - \phi_i,
\end{equation}
where $u^{(\phi_i, \phi_{-i})}$ is the utility for seller $s_i$ after $s_i$'s execution of $\phi_i$ and the others' execution of $\phi_{-i}$, respectively.

\begin{thm}
	\label{theorem:nash}
	The bribery game $G$ in a rating system has at least one Nash equilibrium.
\end{thm}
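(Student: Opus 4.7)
The plan is to prove existence by invoking the Debreu--Fan--Glicksberg theorem (equivalently, Kakutani's fixed-point theorem applied to the joint best-response correspondence). The three standard ingredients are: (i) each player's strategy set is a non-empty, compact, convex subset of a Euclidean space; (ii) the payoff $\pi_i(\phi_i, \phi_{-i})$ is continuous in the joint strategy profile; and (iii) $\pi_i(\cdot, \phi_{-i})$ is quasi-concave in $\phi_i$ for every fixed $\phi_{-i}$.

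First I would compactify the strategy space. Although $\phi_i \in [0, +\infty)$ as given, inspection of (\ref{equation:stra}) provides an a priori bound: $u^{(\phi_i, \phi_{-i})}$ is uniformly bounded above by $\mathcal{N}\cdot k$, so whenever $\phi_i > \mathcal{N}\cdot k$ one has $\pi_i(\phi_i, \phi_{-i}) < \pi_i(0, \phi_{-i})$, and no rational seller would ever choose such a $\phi_i$. Hence each strategy space may be replaced by $[0,\bar\phi]$ with $\bar\phi := \mathcal{N}\cdot k$ without altering the equilibrium set, and the analysis takes place on the compact convex polytope $\Phi = \prod_{i=1}^{M}[0,\bar\phi]$.

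Next I would verify continuity and own-strategy concavity of $\pi_i$. Writing $A_i = \mathcal{N}-\sum_{j\ne i}\phi_j$, $L_i = |B_i|$, and $R_i^0 = \sum_{r\in R_i} r$, and adopting the simplification $\Omega_2 = 1$ explicitly used in this section, the payoff reads
\[
\pi_i(\phi_i,\phi_{-i}) \;=\; k\,(A_i-\phi_i)\,\frac{R_i^0+\phi_i}{L_i+\phi_i} \;-\; u_i^0 \;-\; \phi_i,
\]
a rational function with strictly positive denominator on $\Phi$, hence jointly continuous. For concavity in $\phi_i$, rewrite the fraction as $1 + (R_i^0-L_i)/(L_i+\phi_i)$. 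Because every rating lies in $[0,1]$ we have $R_i^0 \le L_i$, and a direct computation gives the second derivative of $(A_i-\phi_i)(R_i^0-L_i)/(L_i+\phi_i)$ equal to $2(R_i^0-L_i)(A_i+L_i)/(L_i+\phi_i)^3 \le 0$ on the feasible region $A_i+L_i>0$. Adding the remaining linear term $-\phi_i$ preserves concavity, and concavity implies quasi-concavity.

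With the three ingredients in place, Debreu--Fan--Glicksberg yields a pure-strategy Nash equilibrium $\phi^{\ast}\in\Phi$, which is precisely a fixed point of the joint best-response correspondence $BR:\Phi\rightrightarrows\Phi$ (upper hemicontinuous by Berge's maximum theorem, with non-empty compact convex values by steps (i)--(iii)). I expect the main obstacle to be a rigorous check of quasi-concavity once one drops the simplifying assumption $\Omega_2 = 1$: the composition of $x\mapsto x^{\Omega_2}$ with the rational mean $\bar r_i^\phi$ is not obviously concave for general $\Omega_2$, so either a monotonicity-based best-response argument or the explicit reduction $\Omega_2 = 1$ adopted here must be invoked to complete the verification.
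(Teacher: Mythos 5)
Your proposal is correct in substance, but it proves the theorem by a genuinely different route than the paper. The paper's own proof is a one-step appeal to Nash's 1950 theorem for finite games: it treats the number of sellers and the strategy space as finite (consistent with the constraints $\phi_i \in \mathbb{N}$ and $\sum_i \phi_i \le \mathcal{N}$ that appear later in the proof), asserts existence on that basis, and then spends the remainder of the argument deriving the first-order conditions $\partial u^{(\phi_i,\phi_{-i})}/\partial \phi_i = 1$ that characterize the equilibrium profile. You instead keep the continuous strategy space $[0,+\infty)$, compactify it via the domination bound $\bar\phi = \mathcal{N}k$, and verify continuity and own-strategy concavity so that Debreu--Fan--Glicksberg (equivalently Kakutani on the best-response correspondence) applies. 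What your route buys is nontrivial: Nash's finite-game theorem only guarantees a mixed-strategy equilibrium, whereas your argument delivers a pure-strategy equilibrium in the continuous formulation, and your concavity computation is exactly what makes the paper's first-order conditions sufficient rather than merely necessary --- a point the paper leaves implicit. What the paper's route buys is brevity, and fidelity to the integer interpretation of $\phi_i$ as a number of bribed buyers. Two small repairs to your write-up: (i) on the relaxed product box $\prod_i[0,\bar\phi]$ an opponent profile can make $A_i = \mathcal{N}-\sum_{j\ne i}\phi_j$ so negative that $A_i + L_i \le 0$, where your second-derivative sign argument fails; but in that case $\pi_i$ is monotone decreasing in $\phi_i$, hence still quasi-concave, so the fixed-point argument survives (alternatively, impose the paper's feasibility constraint $\sum_i\phi_i\le\mathcal{N}$ and handle it as a generalized Nash game); (ii) your caveat about general $\Omega_2$ is moot for this theorem, since the paper explicitly sets $\Omega_2 = 1$ in this section, exactly as you assume.
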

\begin{proof}
	From Nash theorem~\cite{nash1950equilibrium}, every game with a finite number of players and action profiles has at least one Nash equilibrium. In a bribery game, both the numbers of sellers and the strategy space are finite. Therefore, at least one Nash equilibrium can be reached in the game.
	
	Let $(\phi^*_1, ..., \phi^*_M)$ be a Nash equilibrium in $G$. Then, for each $\phi^*_i$, $\phi^*_i$ maximizes equation~\eqref{equation:strax} given that the other sellers choose strategy combination $(\phi^*_1, ..., \phi^*_{i-1}, \phi^*_{i+1}, ..., \phi^*_M)$.
	The first-order condition for this optimization problem is
	\begin{equation}
		\label{equation:stra}
		\frac{\partial \pi(\phi_i, \phi_{-i})}{\partial \phi_i} =  \frac{\partial u^{(\phi_i, \phi_{-i}})}{\partial \phi_i}  - 1 = 0  \Rightarrow \frac{\partial u^{(\phi_i, \phi_{-i})}}{\partial \phi_i} = 1.
	\end{equation}
	
	Based on Lemma~\ref{lemma:profitstra}, for seller $s_i$, if $\frac{|B_i|}{\mathcal{N}} \geq k$, the optimal strategy for $s_i$ would be $\phi_i = 0$.
	The best response $(\phi^*_1, ..., \phi^*_M)$ will solve
	
	\begin{align}
		\label{equation:matrix}
		\begin{split}
			\frac{\partial u^{(\phi_1, \phi_{-1})}}{\partial \phi_1} = 1;  \cdots = 1; 
			\frac{\partial u^{(\phi_M, \phi_{-M})}}{\partial \phi_M} = 1.
			\end{split}
	\end{align}
	Subject to
	\begin{align}
		\label{equation:cons}
		\begin{split}
			\phi_i \in \mathbb{N};
			\sum \phi_i <= \mathcal{N} \mathrm{~and~}
			\phi_i = 0;~if~\frac{|B_i|}{\mathcal{N}} \geq k.
		\end{split}
	\end{align}
\end{proof}

\textbf{Example 1.} Consider two sellers of a duopoly game in a market, in which the number of interacted buyers with seller $i$ is $B_i=5$ and seller $j$ is $B_j=2$.
These buyers cannot be bribed again by other bribery strategies.
The average evaluation given by those buyers on sellers $i$ and $j$ are $\bar{r_i}=0.2$ and $\bar{r_j}=0.5$, respectively.
The total number of buyers in the market is $n = 20$.
The profit for bribing one buyer is $k = 2$.
The strategies for the sellers are $\phi_i$ and $\phi_j$.
If $(\phi_i, \phi_j)$ forms a Nash equilibrium, according to the above statements, the optimal strategy should solve equation~\eqref{equation:stra}.

Substituting $\phi_i$ in equation~\eqref{equation:stra} and expanding $u^{\phi_i}$ yields

\begin{equation}
k \cdot [(\mathcal{N} - \sum \phi_i)\frac{\bar{r_i}\cdot |B_i| + \phi_i}{|B_i| + \phi_i}]' = 1.
\end{equation}
Each $\phi_i$ can be solved by computing all the sellers' first-order conditions subjecting to the constraints of equation
~\eqref{equation:cons}.
%
Solving the partial differential equations considering the constraints yields
$\phi_i = 2, \phi_j = 1$.

In this example, the first four strategies for both sellers dominant others; we further present the  bribery problem in the accompanying bi-matrix Table~\ref{table:matrix}.
The above result can also be derived by iterated elimination of strictly dominated strategies. In Table~\ref{table:matrix}, both seller $i$ and $j$ have four strategies. For seller $j$, strategy 1 strictly dominates the others, so a rational seller $j$ will play strategy 1 anyway. Thus, if seller $i$ knows that seller $j$ is rational, then seller $i$ can eliminate strategies 0, 2, and 3 from seller $i$ strategy space. In the second column, seller $i$'s strategy 2 strictly dominates the others, leaving (6.57, 12.33) as the outcome of the game.
We also plot seller $i$'s payoff in Fig.~\ref{fig:payoff3d}. In the figure, when the strategy is larger than 4, the payoff is decreased linearly. This is consistent with Lemma~\ref{lemma:profitstra} that bribing more buyers may not yield more profit.


\begin{figure}
\begin{floatrow}
\ffigbox{%
  \includegraphics[trim=90 25 35 60, clip, width=0.4\textwidth]{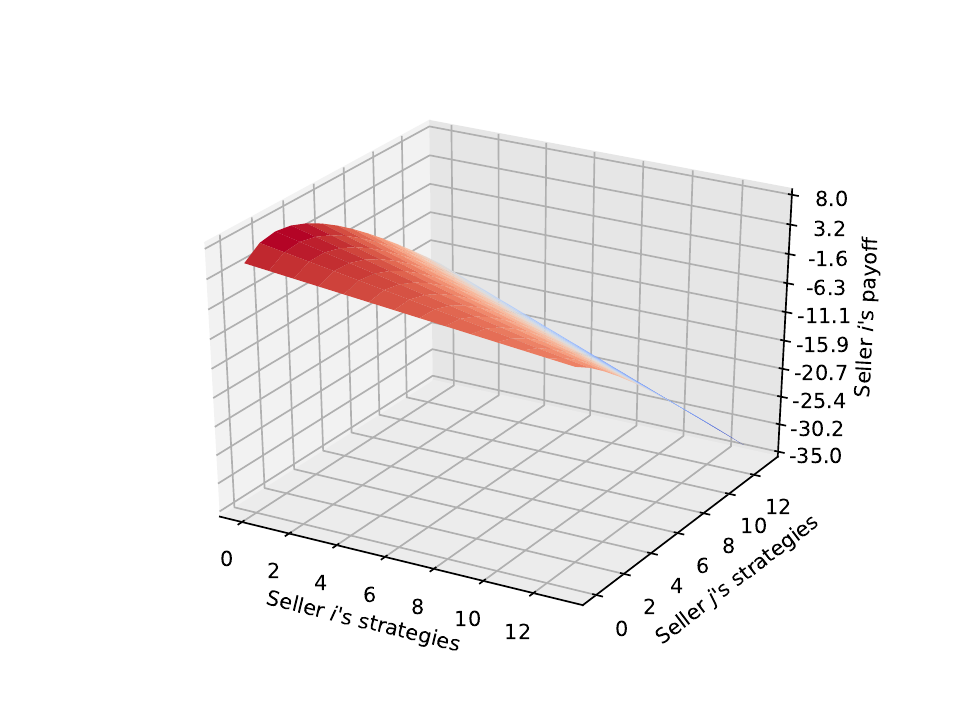}
}{%
  \caption{Seller $i$'s payoff.}%
  \label{fig:payoff3d}
}
\capbtabbox{%
\begin{tabular}{l c|c|c|c|c|}
		& \multicolumn{1}{c}{} & \multicolumn{4}{c}{$j$}\\
		& \multicolumn{1}{c}{} & \multicolumn{1}{c}{$0$}  & \multicolumn{1}{c}{$1$} & \multicolumn{1}{c}{$2$} & \multicolumn{1}{c}{$3$} \\\cline{3-6}
		\multirow{4}*{$i$}  & $0$ & $5.20, 13.00$ & $4.80, 15.00$ & $4.40, 14.50$ & $4.00, 13.00$  \\\cline{3-6}
		& $1$ & $7.00, 12.00$ & $6.33, 13.67$ & $5.67, 13.00$ & $5.00, 11.40$ \\\cline{3-6}
		& $2$ & $7.43, 11.00$ & $6.57, 12.33$ & $5.71, 11.50$ & $4.86, 9.80$ \\\cline{3-6}
		& $3$ & $7.00, 10.00$ & $6.00, 11.00$ & $5.00, 10.00$ & $4.00, 8.20 $ \\\cline{3-6}
	\end{tabular}
}{%
  \caption{A bribery model of duopoly}%
  \label{table:matrix}
}
\end{floatrow}
\end{figure}

In contrast to individual maximized payoff, the social optimum considers the whole social welfare and maximizes
\begin{equation}
\label{equation:sopt}
\max\limits_{0\leq O^{**} < \infty}O^{**} =  \max\limits_{0\leq \phi_i^{**} \leq N} \sum_{i = 1}^M\pi(\phi_i^{**}, \phi_{-i}),
\end{equation}
where $O=\sum_{i = 1}^M\pi(\phi_i, \phi_{-i})$.
Based on Theorem~\ref{theorem:nash}, we can infer this corollary.
\begin{corollary}
	Too many buyers are bribed in the Nash equilibrium compared to the social optimum. That is, $\mathcal{O^*} \geq \mathcal{O^{**}}$.
\end{corollary}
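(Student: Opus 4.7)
The plan is to exploit the externality that each seller's bribery imposes on the others: bribing a buyer removes that buyer from the common pool $\mathcal{N}$, which shrinks every other seller's utility. A Nash equilibrium seller ignores this external cost, whereas the social planner internalizes it, so the planner's marginal benefit condition is strictly tighter. The argument should mirror a standard Cournot-style over-production result.

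First I would write the first-order condition at a Nash profile $(\phi_1^*,\dots,\phi_n^*)$, which Theorem~\ref{theorem:nash} already gives as
\begin{equation*}
\frac{\partial u^{(\phi_i^*,\phi_{-i}^*)}}{\partial \phi_i} = 1 \qquad \text{for each } i.
\end{equation*}
Then I would derive the corresponding first-order condition for the social optimum by differentiating $h(O)=\sum_{j=1}^m \pi(\phi_j,\phi_{-j})$ in $\phi_i$ at $(\phi_1^{**},\dots,\phi_n^{**})$:
\begin{equation*}
\frac{\partial u^{(\phi_i^{**},\phi_{-i}^{**})}}{\partial \phi_i} \;+\; \sum_{j\neq i}\frac{\partial u^{(\phi_j^{**},\phi_{-j}^{**})}}{\partial \phi_i} \;=\; 1.
\end{equation*}

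Next I would show that the cross-partial is strictly negative. Expanding Definition~\ref{def:utility} with the aggregated-rating model used in Example~1,
\begin{equation*}
u^{(\phi_j,\phi_{-j})} \;=\; k\cdot\Bigl(\mathcal{N}-\textstyle\sum_{l\in S}\phi_l\Bigr)\cdot\frac{\bar{r_j}\,|B_j|+\phi_j}{|B_j|+\phi_j},
\end{equation*}
so for $j\neq i$, $\partial u^{(\phi_j,\phi_{-j})}/\partial \phi_i = -k\cdot(\bar{r_j}|B_j|+\phi_j)/(|B_j|+\phi_j)<0$. Substituting into the planner's condition gives
\begin{equation*}
\frac{\partial u^{(\phi_i^{**},\phi_{-i}^{**})}}{\partial \phi_i} \;=\; 1+\sum_{j\neq i}\frac{k(\bar{r_j}|B_j|+\phi_j^{**})}{|B_j|+\phi_j^{**}}\;>\;1\;=\;\frac{\partial u^{(\phi_i^*,\phi_{-i}^*)}}{\partial \phi_i}.
\end{equation*}

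Finally I would invoke concavity of $u^{(\phi_i,\phi_{-i})}$ in $\phi_i$ (which follows from direct computation of the second derivative of the expression above, the numerator growing linearly while the denominator enters both as a divisor and in the dilution factor $\mathcal{N}-\sum_l\phi_l$). A strictly larger marginal utility at the planner's profile then forces $\phi_i^{**}\le\phi_i^*$ for every $i$, and summing over $i\in S$ yields $\mathcal{O}^{**}=\sum_i\phi_i^{**}\le\sum_i\phi_i^{*}=\mathcal{O}^*$, which is precisely the claim. Boundary cases where Lemma~\ref{lemma:profitstra} already pins $\phi_i^*=0$ fit trivially because $\phi_i^{**}\ge 0$ cannot exceed $\phi_i^*=0$ either (the planner has even less incentive to bribe for such an $i$).

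The main obstacle will be the concavity step: strictly speaking, $u^{(\phi_i,\phi_{-i})}$ is a product of a decreasing linear factor and an increasing concave factor in $\phi_i$, and one has to verify that its second derivative is non-positive on the relevant range so that the comparison of marginal utilities translates into a comparison of the arguments. Once that monotonicity of the best-response map in the marginal-utility level is in hand, the rest of the argument is a direct inequality chain.
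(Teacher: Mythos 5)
Your route is genuinely different from the paper's: the paper proves the corollary by a short exchange/contradiction argument (if $\mathcal{O^{**}}>\mathcal{O^*}$ some seller must bribe more at the optimum than his payoff-maximizing level, and swapping his quantity back raises the welfare sum), whereas you run a Cournot-style comparison of first-order conditions. Your externality computation is fine: with $u^{(\phi_j,\phi_{-j})}=k(\mathcal{N}-\sum_{l}\phi_l)\frac{\bar{r_j}|B_j|+\phi_j}{|B_j|+\phi_j}$ the cross-partial for $j\neq i$ is indeed $-k\frac{\bar{r_j}|B_j|+\phi_j}{|B_j|+\phi_j}<0$, and the concavity you flag as the main obstacle actually holds (write $u_i=k\,A\,g(\phi_i)$ with $A$ affine decreasing and $g$ increasing concave; then $u_i''=k(-2g'+Ag'')<0$ wherever $A\ge 0$). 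The genuine gap is the final inference. You compare $\partial u^{(\phi_i^{**},\phi_{-i}^{**})}/\partial\phi_i>1=\partial u^{(\phi_i^{*},\phi_{-i}^{*})}/\partial\phi_i$ and conclude $\phi_i^{**}\le\phi_i^{*}$ from concavity in $\phi_i$. But the two marginal utilities are evaluated at \emph{different} profiles of the other sellers' strategies, and the own-marginal $\partial u_i/\partial\phi_i=-kg(\phi_i)+k(\mathcal{N}-\sum_l\phi_l)g'(\phi_i)$ also depends on $\phi_{-i}$: it is strictly decreasing in $\sum_{l\ne i}\phi_l$. Concavity in the own argument only orders two points sharing the same $\phi_{-i}$; if the planner's profile has the rivals bribing much less, seller $i$'s marginal can exceed $1$ even at a \emph{larger} $\phi_i^{**}$, so the step fails as stated. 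Moreover the componentwise conclusion $\phi_i^{**}\le\phi_i^{*}$ for every $i$ is stronger than the corollary and need not hold in asymmetric instances (the planner may reallocate bribery toward one seller while shrinking the total); only the aggregate comparison is claimed.

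To close the gap you must argue at the aggregate level. One way is the Gibbons-style commons argument the paper alludes to: sum the Nash first-order conditions and the planner's first-order conditions, note that the planner's condition carries the additional strictly negative terms $\sum_{j\ne i}\partial u_j/\partial\phi_i$, and use that the summed marginal expressions are decreasing in the total $O=\sum_l\phi_l$ to deduce that the planner's total crosses the threshold at a smaller $O$, i.e.\ $\mathcal{O^{**}}\le\mathcal{O^{*}}$; assuming $\mathcal{O^{*}}<\mathcal{O^{**}}$ and deriving a sign contradiction from the two summed conditions is the cleanest formulation. Alternatively, adopt the paper's exchange argument directly: assume $\mathcal{O^{*}}<\mathcal{O^{**}}$, pick $i'$ with $\phi_{i'}^{**}>\phi_{i'}^{*}$, and replace $\phi_{i'}^{**}$ by $\phi_{i'}^{*}$ holding the remaining optimal profile fixed; this change weakly raises every other seller's utility (the pool $\mathcal{N}-\sum_l\phi_l$ grows) and, using the best-response property of $\phi_{i'}^{*}$ together with concavity of $\pi$ in the own argument, raises seller $i'$'s payoff as well, contradicting optimality of $\mathcal{O^{**}}$. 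Either repair keeps your externality insight but replaces the invalid componentwise comparison with an aggregate one.
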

\begin{proof}
	We prove the corollary by contradiction and assume $\mathcal{O^*} < \mathcal{O^{**}}$.
	Then, there must exist at least one seller indexed by $i' (1 \leq i' \leq M)$, that $\phi_{i'}^{**} > \phi_{i'}^*$, since $\sum_{i=1}^M \phi_i^{**} = O^{**} > O^* = \sum_{i=1}^M \phi_i^*$.
	Because  $\phi_{i'}^*$ is the number of bribed buyers by seller $i'$ that can maximize $\pi(\phi_{i'}, \phi_{-i'})$, by definition, $\pi(\phi_{i'}^*, \phi_{-i'}) > \pi(\phi_{i'}^{**}, \phi_{-i'})$.
	Then, by maximizing the social welfare, we have

	\begin{align*}
		\sum_{i = 1}^{i'}\pi(\phi_i, \phi_{-i}) + \pi(\phi_{i'}^*, \phi_{-i'}) + \sum_{i = i'}^M\pi(\phi_i, \phi_{-i}) &>\\ \sum_{i = 1}^{i'}\pi(\phi_i, \phi_{-i}) + \pi(\phi_{i'}^{**}, \phi_{-i'}) + \sum_{i = i'}^M\pi(\phi_i, \phi_{-i}),
	\end{align*}
	which means there exists another $O' = \sum_{i = 1}^{i'} \phi_i^* + \phi_{i'} + \sum_{i = i'}^{M} \phi_i^* < O^{**}$.
	This contradicts with our assumption that $O^{**}$ is the social optimum.
\end{proof}

The corollary illustrates the problem of commons~\cite{gibbons1992game}, where the common resource is over-utilized because each seller considers only his or her own incentives, not the influence of his or her actions on the other sellers.
In general, cooperative behavior is more desirable for the general good, but competitive behavior may bring higher individual gain.

\section{Conclusion}
We formally define the bribing behaviors of sellers in the rating system and analyze its effect leveraging  game theory, where the sellers and their bribing behaviors are mapped into players and their corresponding strategy space, respectively.
In the static game, where the sellers and buyers are static and their payoffs are common knowledge, we find Nash equilibrium existing in the system when all sellers perform their best-bribing strategy.
In the equilibrium state, the system is bribery-proof and has the problem of commons~\cite{gibbons1992game} among sellers.
For our future work, we will design a simulation framework to evaluate our previous theoretical analysis.

\\

\noindent \textbf{Acknowledgements.} 
This work was supported in part by the National Natural Science Foundation of China under Grant No. 61906174 and 62172085, in part by the China Postdoctoral Science Foundation under Grant No. 2020M672275, and in part by JSPS KAKENHI Grant Number JP19H04170. We would like to thank Gautham Prakash for his sharing of Google Play apps dataset.

\bibliographystyle{splncs04}
\bibliography{game}

\end{document}